\def\Adj{{\mathop{\rm Adj}}}
\def\Det{{\mathop{\rm Det}}}
\def\M#1#2{#1^{#2\times #2}}
\newcommand{\bigO}{\mathcal{O}}
\def\A#1#2{A_{#1,#2}}
\def\sumtoJ{\sum_{i=1}^{J} m_i}
\def\sumtoJplusone{\sum_{i=1}^{J+1} m_i}
\newtheorem{thm}{Theorem}
\newtheorem{remark}[thm]{Remark}
\newtheorem{defn}[thm]{Definition}
\newtheorem{ntn}[thm]{Notation}
\title{On Fast Matrix Inversion via Fast Matrix Multiplication}
\author{Zak Tonks \\ Department of Computer Science, University of Bath \\ z.p.tonks@bath.ac.uk }
\begin{document}
\maketitle

\begin{abstract}

Volker Strassen first suggested an algorithm \cite{Strassen1969} to multiply matrices with worst case running time less than the conventional $\bigO(n^3)$ operations in 1969. He also presented a recursive algorithm with which to invert matrices, and calculate determinants using matrix multiplication. James R. Bunch \& John E. Hopcroft improved upon this in 1974 \cite{BunchHopcroft1974} by providing modifications to the inversion algorithm in the case where principal submatrices were singular, amongst other improvements. We cover the case of multivariate polynomial matrix inversion, where it is noted that conventional methods that assume a field will experience major setbacks. Initially, there existed a presentation of a fraction free formulation of inversion via matrix multiplication along with motivations in \cite{TonksSankaranDavenport2017}, however analysis of this presentation was rudimentary. We hence provide a discussion of the true complexities of this fraction free method arising from matrix multiplication, and arrive at its limitations.

\end{abstract}

\section{Background}

Throughout, we discuss matrices over a ring of polynomials, say $R[x_{1},...,x_{m}]$, where $m>1$. Until stated otherwise, $A$ is a matrix of polynomials, and we assume $A$ to be square, of dimension $n = 2^m$ for some $m \in \mathds{N}$. We will refer to this as being of ``\emph{binary size}'', and respectively matrices of ``\emph{non binary size}'' are those that do not fit this description. In particular matrices of binary size can be divided up into four equal quadrants, and as such $A_{i,j}$, $i,j \in \{1,2\}$ refers to the canonical quadrant unless stated otherwise. When we refer to the size of a matrix, we mean either of its dimensions (as every matrix traversed is square unless stated otherwise). Hence if we say ``a matrix of half the size'' we mean a matrix of size $\frac{n}{2} \times \frac{n}{2}$. If we say $A$ is of degree $d$, we mean that all its entries are polynomials of degree $d$, and when we refer to to the degree of a polynomial, we refer to its total degree in all variables, rather than any variable in particular. We assume the result of any multiplication or addition of two polynomials to be expanded, ie. we distribute multiplication over addition. We do not assume polynomial arithmetic to be unit cost, and note that the cost of multiplication of two multivariate polynomials to addition of the same pair should be high, and this ratio even increases with degree. For the time being, we assume every entry of any one matrix to have exactly degree $d$, ie. ``uniformly of degree $d$''.

\subsection{Fast Matrix Multiplication and Inversion\label{sect:FFMultInversion}}

Strassen-like matrix multiplication has been of interest since Strassen's first presentation in 1969. In particular these algorithms are able to provide methods to multiply matrices in $\bigO(n^\omega)$ ring operations where $\omega < 3$. Strassen's first $\omega$ was $\log_{2} 7 \approx 2.8$, arising from the fact that only 7 recursions on matrix multiplications of half the size were required as opposed to 8. The current best $\omega$ is $\approx 2.37$, via Le Gall \cite{LeGall2014}. Many of these methods are known to be extremely inefficient compared to naive matrix multiplication for small matrix sizes, although this is usually for the floating point case. For the purposes of this paper, we are interested in any such algorithm where $\omega < 3$, and note that analysis for the ideal cross over point for recursion in the symbolic case for specific polynomials, between the na\"{i}ve and Strassen--Winograd \cite{Winograd1971} algorithms was covered in \cite{TonksSankaranDavenport2017}.

Strassen also introduced methodology for inversion of a matrix via matrix multiplication, arising from the following:

\begin{equation} \label{eqn:A} A = \left( \renewcommand\arraystretch{1.5} \begin{array}{cc} \A11 & \A12 \\ \A21 & \A22 \\ \end{array} \right) = \left( \renewcommand\arraystretch{1.5} \begin{array}{cc} I & 0 \\ \A21 \A11^{-1} & I \\ \end{array} \right) \left( \renewcommand\arraystretch{1.5} \begin{array}{cc} \A11 & 0 \\ 0 & \Delta \\ \end{array} \right) \left( \renewcommand\arraystretch{1.5} \begin{array}{cc} I & \A11^{-1} \A12 \\ 0 & I \\ \end{array} \right) \end{equation}

\noindent where $\Delta = \A22 - \A21 \A11^{-1} \A12$, and so

\begin{equation} \label{eqn:BH} A^{-1} = \left( \renewcommand\arraystretch{1.5} \begin{array}{cc} \A11^{-1} + \A11^{-1} \A12 \Delta^{-1} \A21 \A11^{-1} & - \A11^{-1} \A12 \Delta^{-1} \\ -\Delta^{-1} \A21 \A11^{-1} & \Delta^{-1} \\ \end{array} \right) \end{equation}

Bunch \& Hopcroft followed this with a completeness result in 1974 \cite{BunchHopcroft1974}, which covered the case where an $A_{1,1}$ or $\Delta$ received during the algorithm is singular, while the original input matrix $A$ is non singular (and so inversion or taking a determinant should make sense).

Multivariate polynomial matrices appear, for example, in Gr{\"o}bner basis calculations, where we must upper triangulize dense matrices of sparse polynomials. This is essentially the case we wish to cover, and we note that methods involving interpolation will work poorly here. 

\subsection{Fraction Free Inversion}

However, Section \ref{sect:FFMultInversion} assumes a field, as inversion is analogous with fractions - the inverse of a matrix of polynomials is in general a matrix of rational functions. In performing these algorithms symbolically, this means we are forced to take GCDs (Greatest Common Divisors) of potentially large polynomials in order to simplify results of inversion. The worst case complexity of taking the GCD is polynomial in the degree of the polynomials and exponential in the number of variables \cite{Brown1971}. As far as we are concerned the number of variables may as well be fixed, but the degree of numerators and denominators will increase per recursion as a result of the matrix arithmetic performed in any one recursion.

Our original full presentation of a fraction free modification to Strassen's inversion follows, as was given in \cite{TonksSankaranDavenport2017}, where we investigated the ideal cross over point between standard matrix multiplication and Strassen's methodology in the purely symbolic case. We omit lines indicating that the $A_{i,j}$, $B_{i,j}$ are the canonical submatrices of $A$, $B$ respectively. The operators $+$ and $\cdot$ are the operators acting on $\M Rn \times R^\dag$ essentially canonical as per, for example, the rational numbers. \\

\begin{algorithm}[H]
\caption{First fraction free formulation of Strassen inversion of a matrix $A$ over a ring $R$. The matrices $\A11$ and $\Delta$ are required to be non singular at every iteration.}
\label{alg:FFBH}
{\fontsize{10}{15}\selectfont
\begin{algorithmic}
\STATE \textbf{Input:} $(A,d_{1}) \in \M Rn \times R^\dag$
\STATE \textbf{Output:} $(B,d_{2}) \in \M Rn \times R^\dag$ such that $A \cdot B =  B \cdot A = d_{1} d_{2} I$, and as such $B$ is $d_1 Adj(A)$, and $d_2$ is $Det(A)$.
\STATE \textbf{begin algorithm} FFInversion($A,d_{1}$)
\IF{$n = 1$}
\STATE $B \leftarrow (d_{1})$, $d_{2} \leftarrow \A11$
\ELSE
\STATE $(A^{adj}_{1,1},a_{1,1}^{adj}) \leftarrow$ FFInversion($\A11,1$)
\STATE $(\Delta,\delta) \leftarrow (\A22,1) + (-\A21,1) \cdot (A^{adj}_{1,1},a_{1,1}^{adj}) \cdot (\A12,1)$
\STATE $(\Delta^{adj},\delta^{adj}) \leftarrow$ FFInversion($\Delta,\delta$)
\STATE $(\Lambda,\lambda) \leftarrow (\Delta^{adj},\delta_{adj}) \cdot (\A21,1) \cdot (\A11^{adj},a_{1,1}^{adj})$
\STATE $(B_{1,1}^{'},b_{1,1}) \leftarrow (\A11^{adj},a_{1,1}^{adj}) \cdot\bigl[(I^{\frac n2\times\frac n2},1) + (\A12,1) \cdot (\Lambda,\lambda) \bigr]$
\STATE $(B_{1,2}^{'},b_{1,2}) \leftarrow (-A^{adj}_{1,1},a_{1,1}^{adj}) \cdot (\A12,1) \cdot (\Delta^{adj},\delta^{adj})$
\STATE $d_{2} \leftarrow \Det(A)$ 
\STATE $d \leftarrow d_{1} d_{2}$ 
\STATE $B_{1,1} \leftarrow \frac{d}{b_{1,1}} B_{1,1}^{'}$
\STATE $B_{1,2} \leftarrow \frac{d}{b_{1,2}} B_{1,2}^{'}$
\STATE $B_{2,1} \leftarrow \frac{-d}{\lambda} \Lambda$
\STATE $B_{2,2} \leftarrow \frac{d}{\delta^{adj}} \Delta^{adj}$
\ENDIF
\end{algorithmic}
}
\end{algorithm}

Note that Algorithm \ref{alg:FFBH} is the ``direct translation'' of Strassen's inversion process. The main difference is the definition of $\Delta$, which in the classical case is $\A22 - \A21 \A11^{-1} \A12 = \A22 - \frac{\A21 \A11^{adj} \A12}{a_{1,1}^{adj}}$, but here we multiply through by $a_{1,1}^{adj^{2}}$ (where $a_{1,1}^{adj}$ is actually the determinant of $\A11$) to obtain the fraction free object $a_{1,1}^{adj} \A22 - \A21 \A11^{adj} \A12$.  When doing fraction free inversion, it makes most sense to talk about the pair $(\Adj(A),\Det(A))$, as a result of the fact the intention was to produce inverses, and $\Det(A) A^{-1} = \Adj(A)$. As such when we discuss ``fraction free inverse'', we will largely mean this pair, especially in the context of inverses via the presented methods. \\

\noindent Under this formulation, (\ref{eqn:A}) becomes:

\begin{equation} A = \left( \renewcommand\arraystretch{1.5} \begin{array}{cc} I & 0 \\ \frac{\A21 \A11^{adj}}{a_{1,1}^{adj}} & I \\ \end{array} \right) \left( \renewcommand\arraystretch{1.5} \begin{array}{cc} \A11 & 0 \\ 0 & \Delta \\ \end{array} \right) \left( \renewcommand\arraystretch{1.5} \begin{array}{cc} I & \frac{\A11^{adj} \A12}{a_{1,1}^{adj}} \\ 0 & I \\ \end{array} \right) \end{equation}

\noindent where $\Delta = a_{1,1}^{adj} \A22 - \A21 \A11^{adj} \A12$, $a_{1,1}^{adj} = Det(\A11)$, and $\delta^{adj} = Det(\Delta)$. Then (\ref{eqn:BH}) becomes:

$$ A^{adj} = Det(A) \left( \renewcommand\arraystretch{1.5} \begin{array}{cc} \frac{a_{1,1}^{adj} \delta^{adj} \A11^{adj} +  \A11^{adj} \A12 \Delta^{adj} \A21 \A11^{adj}}{a_{1,1}^{adj^{2}} \delta^{adj}} & \frac{- \A11^{adj} \A12 \Delta^{adj}}{a_{1,1}^{adj} \delta^{adj}} \\ \frac{-\Delta^{adj} \A21 \A11^{adj}}{a_{1,1}^{adj} \delta^{adj}} & \frac{\Delta^{adj}}{\delta^{adj}} \\ \end{array} \right) $$

\begin{equation} = \left( \renewcommand\arraystretch{1.5} \begin{array}{cc} a_{1,1}^{adj^{\frac{n}{2} - 3}} (a_{1,1}^{adj} \delta^{adj} \A11^{adj} +  \A11^{adj} \A12 \Delta^{adj} \A21 \A11^{adj}) & - a_{1,1}^{adj^{\frac{n}{2} - 2}} \A11^{adj} \A12 \Delta^{adj} \\ - a_{1,1}^{adj^{\frac{n}{2} - 2}} \Delta^{adj} \A21 \A11^{adj} & a_{1,1}^{adj^{\frac{n}{2} - 1}} \Delta^{adj} \\ \end{array} \right) \end{equation}

\noindent via the fact discussed in Section \ref{section:improvements} immediately below.

\subsubsection{Immediate Improvements \label{section:improvements}}

As postulated in \cite{TonksSankaranDavenport2017}, we have that $\Det(A)$ arises naturally in the course of computation as $\frac{\delta^{adj}}{(a_{1,1}^{adj})^{\frac{n}{2}-1}}$. Immediately, this saves on computations used to calculate $\Det(A)$, which could previously be obtained in $\bigO(n^{\omega})$ operations via the fraction free formula $\Det(\frac{\Delta}{a_{1,1}^{adj}})\Det(A_{1,1}^{adj})$, which lends itself to a recursive definition requiring adjugates of submatrices, and hence mutually dependent on Algorithm \ref{alg:FFBH}. Removing the need to calculate this via any method removes a significant amount of work, but only up to a constant in terms of the $\bigO$-complexity of the algorithm.

However this formula actually does more - by substituting $\frac{\delta^{adj}}{(a_{1,1}^{adj})^{\frac{n}{2}-1}}$ for $d_{2}$ in all subsequent steps of Algorithm \ref{alg:FFBH}, and working out exact expressions for the pairs calculated via the new operators, we see that significant cancellations occur in calculating the $B_{i,j}^{'}$, and in fact we can obtain much more concise expressions for these matrices and polynomials. In fact, the full definitions of the operators $\cdot$ and $+$ on pairs $(M,c)$ in $R^{n \times n} \times R^{\dag}$ begin to seem superfluous, and in fact we may as well calculate every object directly. Throughout the course of the next section, we see that even the $b_{i,j}$ become redundant, as we require less exact divisions.

We may as well not recurse via this type of method all the way down to size 1. If $A$ is a $2 \times 2$ matrix, the well known method for computing the adjugate is a matter of negating or swapping submatrices of $A$ (the cofactor matrix), and the determinant can be computed in 2 multiplications and an addition (of size 1 matrices ie. polynomials). In contrast, computing the equivalent pair at this dimension via Algorithm \ref{alg:FFBH} can already be seen to require at least as many operations just in computing the relevant $\Delta$ and $\Lambda$ (4 multiplications and an addition).

With a view to these improvements, Algorithm \ref{alg:FFBH2} is a superior implementation that considers the above. Note that the claim that $d_2$ is $Det(A)$ is inductive on $n$.

\begin{algorithm}[H]
\caption{Second fraction free formulation of Strassen inversion of a matrix $A$ over a ring $R$. The matrices $\A11$ and $\Delta$ are required to be non singular at every iteration.}
\label{alg:FFBH2}
{\fontsize{10}{15}\selectfont
\begin{algorithmic}
\STATE \textbf{Input:} $(A,d_{1}) \in \M Rn \times R^\dag$
\STATE \textbf{Output:} $(B,d_{2}) \in \M Rn \times R^\dag$ such that $A \cdot B =  B \cdot A = d_{1} d_{2} I$, and as such $B$ is $d_1 Adj(A)$, and $d_2$ is $Det(A)$.
\STATE \textbf{begin algorithm} FFInversion2($A,d_{1}$)
\IF{$n = 1$}
\STATE $B \leftarrow (d_{1})$, $d_{2} \leftarrow \A11$
\ELSIF{$n=2$}
\STATE $B \leftarrow  \left( \begin{array}{cc} \A22 & -\A12 \\ -\A21 & \A11 \\ \end{array} \right) $, $d_{2} \leftarrow \A11 \A22 - \A12 \A21 $
\ELSE
\STATE $(A^{adj}_{1,1},a_{1,1}^{adj}) \leftarrow$ FFInversion2($\A11,1$)
\STATE $\Delta \leftarrow a_{1,1}^{adj} \A22 -\A21 A^{adj}_{1,1} \A12$
\STATE $(\Delta^{adj},\delta^{adj}) \leftarrow$ FFInversion2($\Delta,a_{1,1}^{adj}$)
\STATE $d_{2} \leftarrow \frac{\delta^{adj}}{(a_{1,1}^{adj})^{\frac{n}{2}-1}}$
\STATE $\Delta^{adj^{'}} \leftarrow \frac{1}{(a_{1,1}^{adj})^{\frac{n}{2}-1}} \Delta^{adj} $
\STATE $\Lambda \leftarrow \frac{1}{a_{1,1}^{adj}} \Delta^{adj^{'}} \A21 \A11^{adj}$
\STATE $\epsilon \leftarrow \A11^{adj} \A12$
\STATE $B_{1,1} \leftarrow \frac{d_{1}}{a_{1,1}^{adj}} (d_{2} \A11^{adj} + \epsilon \Lambda) $
\STATE $B_{1,2} \leftarrow \frac{-d_{1}}{a_{1,1}^{adj}} (\epsilon \Delta^{adj^{'}} )$
\STATE $B_{2,1} \leftarrow -d_{1} \Lambda$
\STATE $B_{2,2} \leftarrow d_{1} \Delta^{adj^{'}}$
\ENDIF
\end{algorithmic}
}
\end{algorithm}

\section{Analysis of Algorithm \ref{alg:FFBH2}}

We consider Algorithm \ref{alg:FFBH2} from here onwards. In particular this means we recurse down to size 2, and thus have one less recursion level. So we never obtain $1 \times 1$ ``matrices'', which is a somewhat trivial case that would have consequences with regards to ``matrix content'', which will be defined in due course. Note substitution of $\Det(A)$ has resulted in cancellations, which results in fewer required divisions (especially on the $B_{i,j}$).

\subsection{Degree Bloat in Intermediate Matrices\label{sect:degbloat}}

Note that if $A$ is a matrix of size $n$ and degree $d$, then $\Adj(A)$ will be a matrix of degree $(n-1)d$, and $\Det(A)$ will be a single polynomial of degree $nd$, as they are sums of $n-1$ and $n$ products respectively. Informally, this implies that working with adjugates means that we will introduce matrices of at least degree $\bigO(nd)$, but we should avoid performing arithmetic on them, possibly by performing cancellations where possible. Indeed our fraction free formulation relies on the multiplication, and introducing multiplications of polynomials of degree anything other than $d$ invalidates any assumption that the cost of multiplication is fixed (and hence the ratio of cost of multiplication to addition, which was a large focus of our previous work).

First, note that if we say a matrix is degree $\bigO(d)$, then we mean all its elements are polynomials of degree $\bigO(d)$. Further, we must make rigorous a concept analogous to ``content'' for polynomials viewed recursively over their variables. For the following two definitions, assume $C$ to be the result of some arithmetic and/or inversion operations from the submatrices of an original matrix $A$, where $C$ is of a binary size possibly less than $n$, but more than 1 - it makes no sense to talk about the content of a matrix of size 1, as this would be trivial, as every element of the matrix is divisible by itself. This $A$ should be assumed to be ``random'' in some sense, or ideally ``symbolic'', which will be defined in due course. For the purposes of the following Definition \ref{defn:MC}, when we say $C_{i,j}$, we mean the $i,j$th element of $C$, rather than the canonical submatrix that we usually mean throughout.

\begin{defn} \label{defn:MC} Define the \emph{content} of a matrix to be the gcd of all its elements, analogous to the same concept for a polynomial. ie. $$content_{gcd}(C) = gcd(C_{1,1},gcd(C_{1,2},gcd(...,gcd(C_{2,1},gcd(...,gcd(C_{n,n-1},C_{n,n}))))))$$ \end{defn}

\begin{defn} \label{defn:SMC} Define the \emph{systematic matrix content}, $content_{sys}(C)$ of a matrix to be the content that we can \emph{deduce} the matrix to have. That is, if $C$ is a matrix that is the result of some arithmetic and inversion operations from some symbolic matrix $A$ (such that every entry of $A$ is a distinct symbol $a_{i,j}$), then $c$ is the systematic matrix content of $C$ if $c$ is the polynomial of largest degree such that every element of $C$ is divisible by $c$. In this way, the systematic matrix content of a symbolic matrix is equal to $content_{gcd}(C)$. \end{defn}

The distinction between the above definitions is important. It is clear that \par \noindent $content_{sys}(C) \mid content_{gcd}(C)$. When discussing deterministic algorithms, Definition \ref{defn:SMC} allows us to talk of factors that we know will always be there in any one intermediate matrix. Content in the context of Definition \ref{defn:MC} somewhat implies that we must take pairwise GCDs of elements in $A$ to compute this object. This content may be more than we bargained for, and depending on the structure of the original matrix $A$, we may deduce ``accidental'' factors that wouldn't be there for every starting matrix $A$.

In many scenarios, $content_{sys}(C) = content_{gcd}(C)$. One situation where this is not the case is when $A$ had content to begin with, ie. $degree(content(A)) > 0$. Then there is a discrepancy between $content_{sys}(A)$ and $content_{gcd}(A)$ that will manifest in, for example the matrix $\Delta$, which is a non linear combination of elements in $A$. In any case, whenever we refer to ``content'' further, we mean systematic matrix content in the context of Definition \ref{defn:SMC}, and so take $content = content_{sys}$. If we mean content in terms of Definition \ref{defn:MC}, we will use $content_{gcd}$.

Note that for some polynomial $c$ and some matrix $A$, $\Adj(c A) = c^{n-1} \Adj(A)$. This is especially important if we are to view $c$ in terms of matrix content of $A$ - we can, at the very least, calculate $\Adj(A)$ from a ``primitive'' $A$, and as such multiply only the polynomials of lowest degree possible in the context of matrix multiplication in the process of Algorithm \ref{alg:FFBH2}. This is the motivation behind Definition \ref{defn:SMC}. In fact, it will be seen further that the term $c^{n-1}$ is quite often the factor that gets divided through (as can be seen frequently in Algorithm \ref{alg:FFBH}), and so in practice we can discard it.

\subsection{Analysis of Degree Bloat in Intermediate Matrices\label{sect:deganalysis}}

Forward, we assume that $d_2$ from Algorithm \ref{alg:FFBH2} is $Det(A)$.

\begin{remark} Where we claim a matrix to have content, this can be verified by computation, for example in Maple, from an original symbolic matrix. Thus, we defer proof of claims of matrix content to computation. We note that the deterministic nature of all computation involved ensures that in the context of systematic matrix context, this will always hold for that particular matrix. A matrix divided through by its content will always result in an integral matrix, ie. fraction free / the divisions are exact. \end{remark}


\begin{thm} \label{thm:toplevelO} All intermediate matrices at the top level for Algorithm \ref{alg:FFBH2} are of degree $\bigO(nd)$. \end{thm}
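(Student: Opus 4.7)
The plan is to compute the total degree of every intermediate object produced at the top level of Algorithm \ref{alg:FFBH2}, invoking the output specification of FFInversion2 (assumed inductively) to identify the adjugates and determinants returned by the two recursive calls. Throughout, the degree of a polynomial matrix is defined entrywise, so degrees add under multiplication and the maximum is taken under addition, and I must verify that the divisions prescribed by the algorithm are exact so that the stated bounds apply to polynomial rather than rational-function objects.

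First I would handle the data produced by the recursive calls. The call FFInversion2$(\A11,1)$ returns $\Aoneoneadj = \Adj(\A11)$ of degree $(n/2-1)d$ and $\aoneoneadj = \Det(\A11)$ of degree $(n/2)d$, both $\bigO(nd)$. Then $\Delta = \aoneoneadj \A22 - \A21 \Aoneoneadj \A12$ is a difference of products each of degree $(n/2+1)d$, so $\Delta$ is of degree $\bigO(nd)$. The second recursive call on $\Delta$ with $d_1 = \aoneoneadj$ returns $\Delta^{adj} = \aoneoneadj \Adj(\Delta)$ and $\delta^{adj} = \Det(\Delta)$, which individually can have degree $\bigO(n^2 d)$; however, combining the Schur-complement determinant identity $\Det(A) = \Det(\A11)\Det(\A22 - \A21 \A11^{-1} \A12)$ with $\Delta = \aoneoneadj(\A22 - \A21 \A11^{-1} \A12)$ as rational functions, and the scalar rule $\Det(cM) = c^{n/2}\Det(M)$, one concludes $\delta^{adj} = \aoneoneadj^{n/2-1} \Det(A)$. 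Hence $d_2 = \delta^{adj}/\aoneoneadj^{n/2-1} = \Det(A)$ has degree exactly $nd$.

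Next I would bound the post-division matrices. Applying $\Adj(cM) = c^{n/2-1}\Adj(M)$ together with the block-adjugate identity $\Adj(A)_{2,2} = \Det(\A11)\Adj(\A22 - \A21 \A11^{-1} \A12)$ yields $\Delta^{adj'} = \Delta^{adj}/\aoneoneadj^{n/2-1} = \Adj(A)_{2,2}$, a polynomial matrix of degree $(n-1)d$; this certifies exactness of the division and bounds the result by $\bigO(nd)$. Then $\epsilon = \Aoneoneadj \A12$ is of degree $(n/2)d$. For $\Lambda = \Delta^{adj'} \A21 \Aoneoneadj / \aoneoneadj$, a similar invocation of the block-inversion formula identifies $\Lambda$ with $-\Adj(A)_{2,1}$, so the division is exact and the result has degree $(n-1)d$. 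Finally $B_{2,2} = d_1 \Delta^{adj'}$ and $B_{2,1} = -d_1 \Lambda$ inherit the bound, and $B_{1,1}, B_{1,2}$ do too after their indicated exact divisions by $\aoneoneadj$, which may be justified by identifying them with $d_1 \Adj(A)_{1,1}$ and $d_1 \Adj(A)_{1,2}$ respectively.

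The main obstacle is precisely the certification that the nominal divisions forming $d_2$, $\Delta^{adj'}$, $\Lambda$, $B_{1,1}$ and $B_{1,2}$ are exact rather than leaving rational-function entries with residual denominator powers of $\aoneoneadj$ that would re-inflate the apparent degree to $\bigO(n^2 d)$. This is resolved by reading the classical block-inversion formulas for $\Det$ and $\Adj$ fraction-freely: each denominator that arises is a power of $\aoneoneadj$ matching exactly the divisor prescribed by the algorithm. Once exactness is secured, every intermediate object at the top level reduces to a block of $\Adj(A)$ or of $\Adj(\A11)$, or to a scalar like $\aoneoneadj$ or $\Det(A)$, and each such object has degree $\bigO(nd)$ directly from the defining formulas for the adjugate and determinant.
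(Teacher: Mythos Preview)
Your proposal is correct and follows the same step-by-step degree-tracking through Algorithm~\ref{alg:FFBH2} that the paper uses. The one genuine addition you make is the justification of exactness: the paper simply computes, e.g., that $\Delta^{adj}$ has degree $d(\tfrac{n}{2}+1)(\tfrac{n}{2}-1)+\tfrac{nd}{2}$, then subtracts the degree of $(\aoneoneadj)^{n/2-1}$ to obtain $(n-1)d$, taking the exactness of these divisions for granted (in the spirit of the Remark opening Section~\ref{sect:deganalysis}, which defers such claims to computation on a symbolic matrix). You instead invoke the Schur-complement identities $\Det(\Delta)=(\aoneoneadj)^{n/2-1}\Det(A)$ and $\Adj(cM)=c^{n/2-1}\Adj(M)$ to identify $d_2$, $\Delta^{adj'}$, $\Lambda$, $B_{1,1}$, $B_{1,2}$ directly with $\Det(A)$ and the four blocks of $\Adj(A)$, which simultaneously certifies that each division is exact and reads off the degree $(n-1)d$ from the adjugate. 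This buys you a self-contained argument that does not rely on symbolic verification, at the cost of importing the block-adjugate formulae; the paper's bare degree arithmetic is shorter but leaves exactness as an external fact. Both routes arrive at the same conclusion that, apart from the raw outputs $\Delta^{adj}$ and $\delta^{adj}$ of the second recursive call (which are $\bigO(n^2d)$ and are immediately divided), every object manipulated at the top level has degree $\bigO(nd)$.
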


\begin{proof} We assume $n \geq 4$ such that $\Delta$ makes sense in the context of Algorithm \ref{alg:FFBH2}. The fraction free inversion of $(\A11,1)$ returns $(\A11^{adj},a_{1,1}^{adj}) = (\Adj(\A11),\Det(\A11))$. $\A11^{adj}$ is of degree $d(\frac{n}{2}-1)$ as $\A11$ is of size $\frac{n}{2}$, and similarly $a_{1,1}^{adj}$ is a single polynomial of degree $\frac{nd}{2}$. $\Delta$ is $a_{1,1}^{adj} \A22 - \A21 \A11^{adj} \A12$, which is of degree $d(\frac{n}{2} + 1)$, and $\delta = a_{1,1}^{adj}$. We then calculate the fraction free inverse of $(\Delta,\delta)$ as $(\Delta^{adj},\delta^{adj})$. $\Delta^{adj}$ is the adjugate of $\Delta$, multiplied through by a factor of $a_{1,1}^{adj}$ (as $\delta = a_{1,1}^{adj}$ appears as $d_{1}$ in Algorithm \ref{alg:FFBH2} here), so $\Delta^{adj}$ has degree $d(\frac{n}{2}+1)(\frac{n}{2}-1) + \frac{nd}{2}$. Note that now we divide through by $a_{1,1}^{adj^{m-1}}$, which is degree $\frac{n^{2} d}{4} - \frac{nd}{2}$. Thus $\Delta^{adj}$ ends up being degree $(n-1)d$ (the degree of the adjugate). All ensuing operations at the top level is matrix arithmetic on matrices of degree $\bigO(nd)$, and as such all resulting matrices are $\bigO(nd)$. Note that the cancellations involving $a_{1,1}^{adj}$ are $\bigO(nd)$ cancellations, and so only end up subtracting a factor of $nd$ up to a constant in the degrees of the matrices involved. \end{proof}

Note that additionally in this formulation, we can cancel an extra factor of $a_{1,1}^{adj}$ from $\Lambda$, such that $\Lambda$ becomes degree $d(n-1)$. \\

The above demonstrates that adjugates and $\Delta$ are both inherently ``degree $nd$'' objects, in that calculating them \emph{will} introduce a factor of $n$ to the degrees of the polynomials in arithmetic. So the ratio of cost of addition to multiplication certainly isn't fixed. However in particular, Algorithm \ref{alg:FFBH2} being inherently recursive should exacerbate this issue, as we must take ``the $\Delta$ of $\Delta$'', and ``the adjugate of $\Delta$'' and so on.

At first, one should be most concerned about the chain of $\Delta$s that ensues upon calculation of inversion of a suitably large matrix $A$, which we shall investigate as a case study into a special case of the bigger picture.

\begin{thm} \label{thm:delta} Let $A \in \M Rn$. Let $\Delta_{0}$ denote the $\Delta$ calculated from the submatrices of $A$, and for all $1 \leq k \leq \log_{2} n - 1$, let $\Delta_{k}$ denote the $\Delta$ calculated from the submatrices of $\Delta_{k-1}$. Then $\Delta_{k}$ is of degree $\bigO(n^{k+1} d)$ for all $0 \leq k \leq \log_{2} n - 1$. \end{thm}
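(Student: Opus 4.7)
The plan is to proceed by induction on $k$, re-applying the degree analysis of $\Delta$ from the proof of Theorem \ref{thm:toplevelO} at each recursive layer. The key observation is that the construction of $\Delta$ acts \emph{uniformly} on its input: if $M$ is a matrix of size $N$ with every entry of degree at most $D$, then its quadrants are of size $N/2$ and degree $D$, the adjugate of its $(1,1)$ quadrant is of degree $D(N/2-1)$, its determinant is of degree $DN/2$, and hence each entry of
\[
\Det(M_{1,1}) \cdot M_{2,2} - M_{2,1} \cdot \Adj(M_{1,1}) \cdot M_{1,2}
\]
is of degree at most $D(N/2+1)$. This bound depends only on the size and degree of $M$, not on its provenance.

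The base case $k=0$ is then exactly Theorem \ref{thm:toplevelO}: $\Delta_0$ is of size $n/2$ and degree $d(n/2+1) = \bigO(nd)$. Inductively, assume $\Delta_{k-1}$ is of size $n/2^k$ with entries of degree $D_{k-1} = \bigO(n^k d)$. Applying the uniform bound to $M = \Delta_{k-1}$ gives
\[
D_k \leq D_{k-1} \left( \frac{n}{2^{k+1}} + 1 \right).
\]
Unwinding this recurrence yields $D_k \leq d \prod_{j=0}^{k} (n/2^{j+1} + 1)$. Since $k \leq \log_2 n - 1$ ensures $2^{j+1} \leq n$ for all $j \leq k$, we have $n/2^{j+1} + 1 \leq n/2^j$, and the product is therefore bounded by $n^{k+1} / 2^{k(k+1)/2}$, which is certainly $\bigO(n^{k+1} d)$.

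No serious obstacle arises. Unlike $\Adj$ or $\Det$, the formula for $\Delta$ involves no exact-division step whose cancellation must be tracked, so the degree simply accumulates multiplicatively from the defining expression. The only subtlety worth flagging is that the bounds on $\Adj(M_{1,1})$ and $\Det(M_{1,1})$ invoked at each recursive level are worst-case structural bounds (essentially the Leibniz expansion), and hence depend only on the size and degree of the argument rather than on any structural properties inherited by $\Delta_{k-1}$ from earlier recursions. The induction thus proceeds by essentially rote re-application of the top-level analysis, and in fact produces the slightly sharper bound $\bigO(n^{k+1} d / 2^{k(k+1)/2})$ en route.
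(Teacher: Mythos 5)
Your proof is correct and follows essentially the same route as the paper's: induction on $k$, with the inductive step bounding the degree of $\Delta_{k}$ via the degrees of $\Det$ and $\Adj$ of the $(1,1)$ quadrant of $\Delta_{k-1}$. The only difference is that you track the recurrence with explicit constants rather than $\bigO$ bookkeeping, which yields the marginally sharper (but not needed) bound with the $2^{k(k+1)/2}$ denominator.
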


\begin{proof} Note that $0 \leq k \leq \log_{2} n - 1$ ensures that $\Delta_{k}$ actually exists in the context of Algorithm \ref{alg:FFBH2}. $\Delta_{\log_{2} n -1}$ is of size 4, and as such the last $\Delta$ to fully be inverted by recursion for Algorithm \ref{alg:FFBH2}. $\Delta_{0}$ is of degree $d(\frac{n}{2} +1) \in \bigO(nd)$ by Theorem \ref{thm:toplevelO}. Now assume that $\Delta_{i}$ is of degree $\bigO(n^{i+1} d)$ for some $0 \leq i \leq \log_{2} n - 1$. We have $\Delta_{i_{1,1}}$ is of degree $\bigO(n^{i+1} d)$, and so the associated determinant $\delta_{i_{1,1}}^{adj}$ is of degree $\bigO(n^{i+2} d)$. $\Delta_{i_{1,1}}^{adj}$ is similarly of degree $\bigO(n^{i+2} d)$, and $\Delta_{i_{1,2}} , \Delta_{i_{2,1}} , \Delta_{i_{2,2}}$ are all of degree $\bigO(n^{i+1} d)$. Hence the object $\Delta_{i+1} = \delta_{i_{1,1}}^{adj} \Delta_{i_{2,2}} - \Delta_{i_{2,1}} \Delta_{i_{1,1}}^{adj} \Delta_{i_{1,2}}$ is of degree $\bigO(n^{i+2} d)$. By induction, $\Delta_{k}$ is of degree $\bigO(n^{k+1} d)$ for all $ 1 \leq k \leq \log_{2} n - 1$. \end{proof}

\begin{remark} In recursions on Algorithm \ref{alg:FFBH2} on $A$, for matrices of size $\frac{n}{2^{i}}, i \geq 1$ there are $2^i$ matrices that are the results of taking various `$\A11$' and `$\Delta$' operations that we must invert, however the worst case really is inversions involving as many `$\Delta$' operations as possible, as per Theorem \ref{thm:delta}. In contrast, if we were to talk about `$A_{1,1_{k}}$' in the same manner as Theorem \ref{thm:delta}, then $A_{1,1_{k}}$ would be of degree $d$ for all $k \geq 1$. Section \ref{section:mixedcases} discusses the degree of ``mixed cases'', which is the generalisation. \end{remark}

Further, when we say ``iteration $k$'', we mean all inversions on matrices of size $\frac{n}{2^{k}}$, but in particular this recursion level involves $\Delta_{k-1}$ from Theorem \ref{thm:delta}, and this matrix is certainly the worst case in terms of degree. The following Theorem \ref{thm:arithmetic} generalises Theorem \ref{thm:toplevelO}.

\begin{thm} \label{thm:arithmetic} The matrices in arithmetic in iteration $k$ of Algorithm \ref{alg:FFBH2} (on $\Delta_{k-1}$) are of degree $\bigO(n^{k+1} d)$. \end{thm}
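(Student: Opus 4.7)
The plan is to mirror the argument of Theorem~\ref{thm:toplevelO}, with $\Delta_{k-1}$ playing the role that $A$ played there. Theorem~\ref{thm:delta} already gives the two ingredients I need about the input to iteration $k$: the matrix $\Delta_{k-1}$ has size $N := n/2^k$ and degree $D := \bigO(n^k d)$. If I can show that one call of FFInversion2 on a matrix of size $N$ and degree $D$ produces only intermediate matrices of degree $\bigO(ND)$, then substituting gives $ND = \bigO((n/2^k)(n^k d)) = \bigO(n^{k+1} d)$, which is the claim.

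To that end I would step through FFInversion2 on $M := \Delta_{k-1}$ exactly as in the proof of Theorem~\ref{thm:toplevelO}. The inner recursive call FFInversion2($M_{1,1},1$) returns an adjugate of degree $(N/2 - 1)D$ and determinant of degree $(N/2)D$. The explicit computation $\Delta_k = m_{1,1}^{adj} M_{2,2} - M_{2,1} M_{1,1}^{adj} M_{1,2}$ therefore produces a matrix of degree $(N/2 + 1)D$. The second recursive call, FFInversion2($\Delta_k, m_{1,1}^{adj}$), yields $\Delta_k^{adj} = m_{1,1}^{adj} \Adj(\Delta_k)$ of degree $(N/2)D + (N/2 - 1)(N/2 + 1)D$, and $\delta_k^{adj} = \Det(\Delta_k)$ of degree $(N/2)(N/2 + 1)D$. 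The exact division by $(m_{1,1}^{adj})^{N/2 - 1}$ strips off $(N/2)(N/2 - 1)D$ from the degree of $\Delta_k^{adj}$, leaving $\Delta_k^{adj{}'}$ of degree $(N-1)D$. Every other object in the iteration ($\Lambda$, $\epsilon$, and the $B_{i,j}$) is built by a bounded number of multiplications or additions of matrices of degree at most $(N+c)D$ for a small constant $c$, followed by the divisions already present in the pseudocode, so each is also $\bigO(ND)$.

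Substituting $N = n/2^k$ and $D = \bigO(n^k d)$ then collapses $\bigO(ND)$ into $\bigO(n^{k+1} d)$, giving the stated bound.

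The step I expect to require the most care is the cancellation in the $\Delta_k^{adj{}'}$ division, where the pre-cancellation degree is quadratic in $N$: one must check that the factor $(m_{1,1}^{adj})^{N/2 - 1}$ really does divide exactly, and that the nonunit $d_1$ inherited from the parent recursion does not add an uncancelled multiplicative term to the degree. However, both of these are structurally identical to the top-level argument of Theorem~\ref{thm:toplevelO} — the algorithm's exact-division structure is invariant under recursion depth, only the sizes and degrees change — so the same cancellation carries over verbatim to iteration $k$.
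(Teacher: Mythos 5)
Your proposal is correct and follows the same route as the paper: the paper's proof likewise invokes Theorem~\ref{thm:delta} for the degree of the input $\Delta$-matrix and then repeats the top-level argument of Theorem~\ref{thm:toplevelO} (including the cancellation producing $\Delta_k^{adj'}$ of degree $(N-1)D$) at iteration $k$. You simply carry out explicitly the step the paper compresses into ``in a similar manner to Theorem~\ref{thm:toplevelO}''.
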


\begin{proof} At iteration $k$ on $\Delta$, we calculate $\Delta_{k}$, which via Theorem \ref{thm:delta}  is of degree $\bigO(n^{k+1} d)$. Inversion of this gives us $\Delta_{k}^{adj}$, but in fact we cancel it to receive $\Delta_{k}^{adj^{'}}$, and in a similar manner to in Theorem \ref{thm:toplevelO} this is of degree $\bigO(n^{k+1} d)$. All further arithmetic at this level is on matrices of degree $\bigO(n^{k+1} d)$ and thus leads to matrices of degree $\bigO(n^{k+1} d)$. \end{proof}

\begin{thm} \label{thm:endresult} Fraction free inversion via Algorithm \ref{alg:FFBH2} on a matrix of size $n$ and degree $d$ involves arithmetic on matrices of degree $\bigO(n^{{\log_2 n} - 1} d)$. \end{thm}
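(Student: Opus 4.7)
The plan is to identify the deepest recursion level that uses the main (else) branch of Algorithm \ref{alg:FFBH2} and apply Theorem \ref{thm:arithmetic} there. Because Algorithm \ref{alg:FFBH2} handles the $n=2$ case via an explicit formula and never recurses below size $2$, the last recursive call using the else branch occurs on matrices of size $4$. Under the paper's convention that iteration $k$ denotes inversions on matrices of size $n/2^{k}$, this deepest recursive iteration is $k = \log_{2} n - 2$. At this iteration Theorem \ref{thm:arithmetic} bounds the degree of all matrices in arithmetic by $\bigO(n^{(\log_{2} n - 2) + 1} d) = \bigO(n^{\log_{2} n - 1} d)$. Since the bound $\bigO(n^{k+1} d)$ is monotonically increasing in $k$, this is the largest degree encountered in any iteration to which Theorem \ref{thm:arithmetic} directly applies.

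It then remains to check that the base case $k = \log_{2} n - 1$ does not exceed this bound. The size-$2$ matrices processed at this level are precisely the $\A11$-style and $\Delta$-style children of the previous level; the worst of these is $\Delta_{\log_{2} n - 2}$, whose entries have degree $\bigO(n^{\log_{2} n - 1} d)$ by Theorem \ref{thm:delta}. The $\A11$-style children inherit the degree of their parent, which is also $\bigO(n^{\log_{2} n - 1} d)$ by Theorem \ref{thm:arithmetic} applied at $k = \log_{2} n - 2$. Since the $n=2$ branch consists only of two multiplications and one addition on polynomials of this degree, no larger degree is introduced, and the bound $\bigO(n^{\log_{2} n - 1} d)$ is preserved down to the very bottom of the recursion.

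The main subtlety, such as it is, is the off-by-one accounting forced by Algorithm \ref{alg:FFBH2} terminating at size $2$ rather than size $1$; this is what prevents the exponent from being $\log_{2} n$ rather than $\log_{2} n - 1$. Once iteration $\log_{2} n - 1$ is handled separately as the base case, the stated asymptotic bound follows immediately from Theorems \ref{thm:delta} and \ref{thm:arithmetic} by taking a maximum over all recursion levels.
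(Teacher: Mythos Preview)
Your proof is correct and follows essentially the same approach as the paper: identify the deepest iteration that uses the main \texttt{else} branch (size $4$, iteration $k=\log_2 n-2$) and apply Theorem~\ref{thm:arithmetic} there to get the $\bigO(n^{\log_2 n-1}d)$ bound. Your explicit treatment of the size-$2$ base case via Theorem~\ref{thm:delta} is a welcome clarification---the paper's own proof is terser and conflates ``size~$2$'' with ``recursion level $\log_2 n-2$'' in a way that could confuse a reader---but the underlying argument is the same.
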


\begin{proof} Algorithm \ref{alg:FFBH2} calls itself recursively on $\Delta_{i}$ matrices (ie. we are in the ``else'' case of Algorithm \ref{alg:FFBH2}) ${\log_2 n}  - 2$ times if we stop recursion on Algorithm \ref{alg:FFBH2} at size 2 and calculate adjugates via the transpose of the cofactor matrix. Using Theorem \ref{thm:arithmetic}, we can deduce that the matrices in arithmetic at size 2 (ie. recursion level $\log_{2} n -2$) are of degree $\bigO(n^{\log_{2} n -1} d)$. \end{proof}

\begin{remark} \label{rmk:endremark} The equivalent results from Theorems \ref{thm:arithmetic} and \ref{thm:endresult} for Algorithm \ref{alg:FFBH} would have the degrees at $\bigO(n^{k+2} d)$ and $\bigO(n^{\log_{2} n + 1}d)$ respectively.  \end{remark}

\begin{proof} Consider Algorithm \ref{alg:FFBH}, and observe that Theorem \ref{thm:delta} holds the same in this case. In Theorem \ref{thm:arithmetic}, we do not perform cancellation on $\Delta_{k}^{adj}$, and so as the adjugate of $\Delta_{k}$ which is of degree $\bigO(n^{k+1} d)$, it is of degree $\bigO(n^{k+2} d)$, and then all arithmetic is now on matrices of this degree. For Algorithm \ref{alg:FFBH} we instead perform $\log_{2} n - 1$ recursions due to recursing down to matrices of size 1. In total, the equivalent result for Theorem \ref{thm:endresult} has that we perform arithmetic on matrices of degree $\bigO(n^{\log_{2} n + 1} d)$.  \end{proof}

\subsection{Cancellations in Intermediate Matrices}

Having to perform arithmetic on matrices of degree $\bigO(n^{{\log_2 n} - 1} d)$ is beyond the realms of being competitive - however in this section we discuss what cancellations can be provided to mitigate this degree increase. This requires some discussion of mixed cases, which so far we have neglected. But we first present the ``pure'' $\Delta$ case, ie. always choosing the right subtree from Figure \ref{fig:binarytree}.

\begin{remark} Assume from here onwards that when computing any intermediate matrix, we do so from a cancelled down matrix, ie. from a matrix with a systematic matrix content of 1. In the context of Section \ref{sect:degbloat}, $\Delta_{k}$ would be calculated from a $\Delta_{k-1}$ that has been divided through elementwise by the content defined in due course. \end{remark}

\begin{ntn} \label{ntn:submatrixntn} The notation $A_{1..i,1..j}$ where $i,j \in \{1,..,n\}$ means the upper left submatrix with $i$ rows and $j$ columns. As an example, $\A11$ would be $A_{1..\frac{n}{2},1..\frac{n}{2}}$ in this notation. Additionally, let $A_{1..i,1..j}$ be 1 for $i,j < 1$. \end{ntn}

\begin{thm} \label{thm:deltaicontents}If $ 0 \leq i \leq \log_{2}(n) - 2$, $\Delta_{i}$ has content $\Det(A_{1..\frac{(2^{i}-1)n}{2^{i}},1..\frac{(2^{i}-1)n}{2^{i}}})^{\frac{n}{2^{i+1}}}$. After cancelling through by this content, $\Delta_{i}$ is of degree $(\frac{(2^{i+1}-1)n}{2^{i+1}}+1)d \in \bigO(nd)$, for all $ 0 \leq i \leq \log_{2}(n) - 2$. \end{thm}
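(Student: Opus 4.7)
The plan is to prove, by strong induction on $i$, the stronger structural identity
\[ \Delta_i = D_{k_i}^{n/2^{i+1}} V_i, \qquad k_j := \frac{(2^j-1)n}{2^j}, \quad D_j := \Det(A_{1..j,1..j}), \]
where $V_i$ denotes the fraction-free Schur complement of $A_{1..k_{i+1},1..k_{i+1}}$ in $A$, i.e.\ $V_i = D_{k_{i+1}} \tilde V_i$ with $\tilde V_i$ the ordinary rational Schur complement. This one identity delivers both halves of the theorem simultaneously: the extracted factor $D_{k_i}^{n/2^{i+1}}$ is exactly the claimed systematic content (with maximality deferred to Maple verification per the remark immediately preceding the theorem), and each entry of $V_i$ has degree $(k_{i+1}+1)d$, visible from the expansion $V_{i,j,\ell} = D_{k_{i+1}} a_{j\ell} - \mathrm{row}\cdot\Adj(A_{1..k_{i+1},1..k_{i+1}})\cdot\mathrm{col}$, in which the linear term has degree $(k_{i+1}+1)d$ and the matrix product also has degree $d + (k_{i+1}-1)d + d = (k_{i+1}+1)d$.

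For the base case $i=0$, $\Delta_0 = D_{n/2}\A22 - \A21\,\Adj(\A11)\,\A12$ is by definition the fraction-free Schur complement of $\A11$ in $A$, so $V_0 = \Delta_0$ and the content factor is $D_0^{n/2} = 1$ via the convention $A_{1..0,1..0} = 1$ from Notation \ref{ntn:submatrixntn}; the degree $(n/2+1)d$ is exactly Theorem \ref{thm:toplevelO}.

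For the inductive step, the remark preceding the theorem permits us to compute $\Delta_{i+1}$ starting from the cancelled $V_i$, so that
\[ \Delta_{i+1} = \det((V_i)_{1,1})(V_i)_{2,2} - (V_i)_{2,1}\,\Adj((V_i)_{1,1})\,(V_i)_{1,2} = \det((V_i)_{1,1}) \cdot T', \]
where $T'$ is the rational Schur complement of $(V_i)_{1,1}$ in $V_i$. Substituting $V_i = D_{k_{i+1}} \tilde V_i$, the factor $D_{k_{i+1}}$ cancels out of the inverse but survives once multiplicatively, and by the Crabtree--Haynsworth quotient identity for iterated Schur complements with respect to nested leading blocks, the remaining bracket equals $\tilde V_{i+1}$, giving $T' = D_{k_{i+1}}\,\tilde V_{i+1}$. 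The classical leading-minor formula $\det((\mathrm{SC})_{1..j,1..j}) = \det(A_{1..k+j,1..k+j})/\det(A_{1..k,1..k})$ then evaluates $\det(\tilde V_{i,1,1}) = D_{k_{i+2}}/D_{k_{i+1}}$ (using the arithmetic identity $k_{i+1} + n/2^{i+2} = k_{i+2}$), so $\det((V_i)_{1,1}) = D_{k_{i+1}}^{n/2^{i+2}-1} D_{k_{i+2}}$. Combining,
\[ \Delta_{i+1} = D_{k_{i+1}}^{n/2^{i+2}-1} D_{k_{i+2}} \cdot D_{k_{i+1}} \tilde V_{i+1} = D_{k_{i+1}}^{n/2^{i+2}} \cdot \bigl(D_{k_{i+2}} \tilde V_{i+1}\bigr) = D_{k_{i+1}}^{n/2^{i+2}} V_{i+1}, \]
which is the identity at level $i+1$.

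The main obstacle is the appeal to Crabtree--Haynsworth in the inductive step: identifying $T'$ with a single-step Schur complement relative to a larger leading block of $A$ is what lets every power of $D_{k_{i+1}}$ collapse neatly. Once that quotient identity is invoked, the rest is bookkeeping of exponents of $D_{k_j}$, and both the content statement and the degree statement fall out at the same time. Primitivity of $V_i$ itself (that no further factor can be systematically extracted) is not required for the displayed identity and, per the remark, is left to computational verification.
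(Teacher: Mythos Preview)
Your proof is correct, and it goes genuinely further than the paper's. The paper's argument is purely degree arithmetic: it assumes the content claim outright (per the remark immediately before the theorem, which defers all content assertions to symbolic computation in Maple), and then just checks by induction that subtracting the degree of the asserted content from the degree of the freshly computed $\Delta_{i+1}$ leaves the claimed $(k_{i+2}+1)d$. Nothing in the paper's proof establishes that the asserted determinant power actually \emph{divides} $\Delta_i$.

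Your approach, by contrast, proves a structural identity $\Delta_i = D_{k_i}^{n/2^{i+1}} V_i$ with $V_i$ identified as the fraction-free Schur complement of the leading $k_{i+1}\times k_{i+1}$ block of $A$. The key extra ingredient is the Crabtree--Haynsworth quotient formula for iterated Schur complements of nested leading blocks, together with the classical fact that leading minors of a Schur complement are ratios of leading minors of $A$. This buys you an honest proof of divisibility (not just degree compatibility), and it also explains \emph{what} the cancelled-down $\Delta_i$ is as an object, which the paper leaves opaque. Primitivity (that no further systematic factor can be extracted) is still deferred to computation in both treatments, so on that point you are on equal footing. In short: same degree bookkeeping, but your route replaces an unproved content assumption with an actual algebraic identification, at the cost of importing one classical Schur-complement identity.
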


\begin{proof} Note that $0 \leq i \leq \log_{2}(n) - 2$ ensures that $\Delta_{i}$ is a matrix of size at least 2, such that the definition of matrix content makes sense.

As above, we calculate each $\Delta_{i}$ from a $\Delta_{i-1}$ that has been divided through by the content from \ref{thm:deltaicontents} and thus has a resulting content of 1. The theorem is true for $i=0$ as by computation on a symbolic matrix of any size we can verify it has a content of 1, which is equivalent to $A_{1..0,1..0}$ under Notation \ref{ntn:submatrixntn}, and Theorem \ref{thm:delta} has that $\Delta_{0}$ is of degree $d(\frac{n}{2}+1)$.

Now assume $\Delta_{i}$ is of degree $(\frac{(2^{i+1}-1)n}{2^{i+1}}+1)d$ for some $i \geq 0$. Then $\Delta_{i}$ is of size $\frac{n}{2^{i+1}}$, and the $\A11$ for $\Delta_{i}$ has the same degree as $\Delta_{i}$, and is of size $\frac{n}{2^{i+2}}$. So $\Det(\A11)$ is of degree $\frac{n}{2^{i+2}}(\frac{(2^{i+1}-1)n}{2^{i+1}}+1)d$, and so $\Delta_{i+1}$ is of degree $(\frac{n}{2^{i+2}}+1)(\frac{(2^{i+1}-1)n}{2^{i+1}}+1)d$. Expanding this, we get $(\frac{(2^{i+1}-1)n^{2}}{2^{2i+3}}+ \frac{(2^{i+2}-1)n}{2^{i+2}}+1)d$. We know via Theorem \ref{thm:deltaicontents} that we can divide through by $\Det(A_{1..\frac{(2^{i+1}-1)n}{2^{i+1}},1..\frac{(2^{i+1}-1)n}{2^{i+1}}})^{\frac{n}{2^{i+2}}}$, which is of degree $\frac{(2^{i+1}-1)n^{2}d}{2^{2i+3}}$. As such after cancellation $\Delta_{i+1}$ is of degree $(\frac{2^{i+2}-1)n}{2^{i+2}} + 1)d$,  which is precisely the induction statement but with $i$ incremented by 1, evidently $\Delta_{i+1}$ is of degree $\bigO(nd)$ as a result, and thus by mathematical induction the statement holds for all $i \geq 1$. \end{proof}

\begin{thm}\label{thm:deltaadjicontent} If $0 \leq i \leq \log_{2}(n) - 3$,  $\Delta_{i}^{adj}$, when computed from a corresponding $\Delta_{i}$ of matrix content 1, has content $\Det(A_{1..\frac{(2^{i+1}-1)n}{2^{i+1}},1..\frac{(2^{i+1}-1)n}{2^{i+1}}})^{\frac{n}{2^{i+1}}-2}$. After dividing through by this content, $\Delta^{adj}_{i}$ achieves degree $\bigO(nd)$.\end{thm}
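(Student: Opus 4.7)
The plan is to prove both halves of the statement --- the identification of the content factor and the $\bigO(nd)$ bound on the reduced degree --- by iterating Sylvester's (a.k.a.\ Desnanot--Jacobi) determinant identity against a Bareiss--style description of the cancelled $\Delta_{i}$ as a matrix of minors of the original $A$.

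First I would establish, in parallel with Theorem \ref{thm:deltaicontents} and by induction on $i$, that the entries of a $\Delta_{i}$ of matrix content $1$ are precisely the $(k+1) \times (k+1)$ minors of $A$ built from the leading $k \times k$ block $A_{1..k,1..k}$ together with one extra row and one extra column, where $k = \frac{(2^{i+1}-1)n}{2^{i+1}}$. The base case $i = 0$ is immediate from $\Delta_{0} = \aoneoneadj (\A22 - \A21 \A11^{-1} \A12)$ together with the classical formula expressing the entries of a scaled Schur complement as $(n/2+1)$-minors of $A$. The inductive step is a single application of Sylvester's identity: an $(n/2^{i+2}+1) \times (n/2^{i+2}+1)$ minor of the matrix-of-minors $\Delta_{i}$ equals $\Det(A_{1..k,1..k})^{n/2^{i+2}}$ times a $(k + n/2^{i+2} + 1)$-minor of $A$, and the extracted power is exactly the content that Theorem \ref{thm:deltaicontents} already cancels.

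Next, the entries of $\Adj(\Delta_{i})$ are, up to sign, the $(n/2^{i+1}-1) \times (n/2^{i+1}-1)$ minors of $\Delta_{i}$. Applying Sylvester's identity once more with $m = n/2^{i+1}-1$ pulls a factor of $\Det(A_{1..k,1..k})^{m-1} = \Det(A_{1..k,1..k})^{n/2^{i+1}-2}$ cleanly out of every entry, each residual being a $(k+m) = (n-1)$-minor of $A$. For a symbolic $A$ the collection of $(n-1)$-minors of $A$ has trivial common factor, so this extracted power is precisely the systematic matrix content in the sense of Definition \ref{defn:SMC}. Since the residual entries are $(n-1)$-minors of $A$, they are of degree $(n-1)d \in \bigO(nd)$, which gives the second assertion. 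As a sanity check I would also verify the degree arithmetic directly: writing $s = n/2^{i+1}$, Theorem \ref{thm:deltaicontents} gives $\Delta_{i}$ of degree $(k+1)d$, so $\Adj(\Delta_{i})$ has entries of degree $(s-1)(k+1)d$, the claimed content has degree $(s-2)kd$, and subtracting yields $(s+k-1)d = (n-1)d$.

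The main obstacle will be the careful bookkeeping of the row and column index sets when iterating Sylvester's identity, specifically verifying that the indexing of minors of the recursively constructed $\Delta_{i}$ aligns with the hypotheses of Sylvester so that the residual minors land on exactly the $(n-1)$-minors of $A$, and so that the extracted power exhausts the systematic content without leaving a further deducible factor. A safe fallback, consistent with the Remark preceding Theorem \ref{thm:toplevelO}, is to defer the content identification to computational verification on a symbolic $A$ and then rely solely on the degree arithmetic above to conclude $\bigO(nd)$.
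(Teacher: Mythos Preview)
Your proposal is correct and takes a genuinely different, in fact stronger, route than the paper. The paper's own proof does not establish the content identification at all: consistent with the Remark preceding Theorem~\ref{thm:toplevelO}, it simply asserts the content factor (the self-reference ``via Theorem~\ref{thm:deltaadjicontent}'' in the proof is really an appeal to that computational-verification Remark) and then carries out exactly the degree arithmetic you sketch in your final sanity check, arriving at $(n-1)d$. Your fallback \emph{is} the paper's proof.

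What you do differently is supply an actual argument for the content claim via iterated Sylvester/Desnanot--Jacobi. Your key lemma---that the cancelled $\Delta_i$ has $(p,q)$ entry equal to the minor of $A$ on rows $\{1,\dots,k,k+p\}$ and columns $\{1,\dots,k,k+q\}$ with $k=(2^{i+1}-1)n/2^{i+1}$---is correct, and the index bookkeeping you flag as the main obstacle in fact aligns cleanly: the recursive $\A11/\Delta$ split always takes the leading block as pivot, so the bordered-minor description propagates with $k\mapsto k+n/2^{i+2}$ at each step, which is precisely the recursion $k_{i+1}=(2^{i+2}-1)n/2^{i+2}$. Applying Sylvester once more with $m=n/2^{i+1}-1$ to the cofactors of $\Delta_i$ then extracts $\Det(A_{1..k,1..k})^{m-1}$ and leaves $(n-1)$-minors of $A$; these are (up to sign) entries of $\Adj(A)$, which for symbolic $A$ have trivial common factor, so the extracted power is exactly the systematic content. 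This buys you a bona fide proof of both the content and the degree claims, and incidentally a proof of the content assertion in Theorem~\ref{thm:deltaicontents} as well, neither of which the paper provides beyond computational verification. The paper's approach, by contrast, buys brevity and avoids importing Sylvester's identity, at the cost of leaving the content formula as an experimentally observed fact.
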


\begin{proof} Note that $0 \leq i \leq \log_{2}(n) - 3$ ensures that $\Delta^{adj}_{i}$ is a matrix of size at least 4. $\Delta^{adj}_{\log_{2} n - 2}$, ie. the $\Delta^{adj}_{i}$ of size 2, is the result of transposition of the cofactor matrix of $\Delta_{i}$, and as such no arithmetic is performed, and as such its content is certainly 1.

Using Theorem \ref{thm:deltaicontents}, we have that after appropriate cancellations $\Delta_{i}$ is of degree $(\frac{(2^{i+1}-1)n}{2^{i+1}}+1)d$, and size $\frac{n}{2^{i+1}}$. Computing $\Delta_{i}^{adj}$ we get that it is of degree $(\frac{n}{2^{i+1}}-1)(\frac{(2^{i+1}-1)n}{2^{i+1}}+1)d$. Expanding this gives $(\frac{(2^{i+1}-1)n^{2}}{2^{2i+2}} +\frac{(2-2^{i+1})n}{2^{i+1}} -1)d$. Now via Theorem \ref{thm:deltaadjicontent}, we can divide through by $\Det(A_{1..\frac{(2^{i+1}-1)n}{2^{i+1}},1..\frac{(2^{i+1}-1)n}{2^{i+1}}})^{\frac{n}{2^{i+1}}-2}$, which is of degree $(\frac{(2^{i+1}-1)n^{2}}{2^{2i+2}} + \frac{(2-2^{i+2})n}{2^{i+1}})d$. Hence after cancellation $\Delta_{i}^{adj}$ is of degree $(\frac{(2^{i+2}-2^{i+1})n -1}{2^{i+1}})d = (n-1)d \in \bigO(nd)$. \end{proof}

\subsection{Mixed Cases\label{section:mixedcases}}

\begin{figure}[h]
\centering 
\includegraphics[width=12cm]{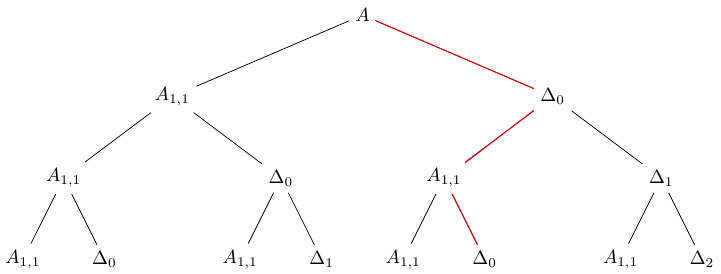}
\caption{A representation of the binary tree for the inversion of a matrix $A$ (that should be of size at least 16 here) via Algorithm \ref{alg:FFBH2}. We simplify notation such that ``the $\A11$ of $\Delta_{0}$'' is just labeled as ``$\A11$''. Note that the tree is left-oriented, that is for any node, the inversion of the left subtree is required before the inversion of the right subtree can be computed, which is required for the inversion of the node (computation of $\Delta$ requires the pair $(\A11^{adj},a_{1,1}^{adj})$). The red path shows an example of a ``mixed case''.} \label{fig:binarytree}
\end{figure}

The red path in figure \ref{fig:binarytree} is an example of a ``mixed case''. In ad-hoc functional notation, the red path gives us $M = \Delta(\A11(\Delta(A)))$. Note that with this notation, $\Delta_k$ from \ref{sect:degbloat} becomes $\Delta^k(A)$, and from here onwards we omit the brackets. Essentially, we require the adjugate of $\A11$ in order to invert $\Delta_{0}$, which is required in order to invert $A$. The case study traversed in Section \ref{sect:degbloat} would have us believe that we should treat the inversion of this $\A11$ as if it were a completely new matrix - indeed it has a trivial systematic matrix content as per the first $\Delta_{0}$ (see Theorem \ref{thm:intermediatecontent}), but then this leads us to believe that $M$ has no systematic content (as it is a ``$\Delta_0$''). Except $M$ is the result of two \emph{net} $\Delta$ operations, and thus is of degree $\bigO(n^2 d)$. We want $\bigO(nd)$. \\ 

\begin{thm} \label{thm:intermediatecontent} If $M$ is a matrix produced directly from a recent $\A11$ operation, then $content_{sys}(M) = 1$. Otherwise, keeping in mind that we calculate each matrix from matrices of respective systematic matrix content 1, if $M = \Delta \phi_1 \dots \phi_k A$ where $\phi_j \in \{ A_{1,1}, \Delta \}$ for all $1 \leq j \leq k$, and $A$ is a matrix of binary size $n$ at least $2^{k+2}$, then $content_{sys} (M) = Det(A_{1..\frac{(2^{i} -1) n}{2^{i}},1..\frac{(2^{i} -1)n}{2^{i}}})^{\frac{n}{2^{k+1}}}$, where $i$ is the number of $\phi_j$ that are $\Delta$. Upon cancelling through by this, we achieve degree at most $\bigO(nd)$. \end{thm}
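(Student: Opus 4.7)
The plan is to separate the two cases of the theorem and handle the mixed-case formula by induction on the chain length $k$, with Sylvester's determinant identity as the main algebraic tool. This is the same identity that implicitly drives the pure-$\Delta$ content count in Theorem \ref{thm:deltaicontents}, and the goal is to extend its bookkeeping to chains in which $A_{1,1}$ operations are interleaved with $\Delta$ operations.

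First I would handle the recent-$A_{1,1}$ case. This operation simply extracts the upper-left quadrant of its input. Under the standing assumption that the input has been normalised to systematic content $1$, the entries of $M$ form a subset of the input's entries, so no new systematically-deducible common factor can arise when $A$ is taken symbolic, giving $content_{sys}(M) = 1$.

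Next, for the formula I would induct on $k$, maintaining the structural invariant that at every intermediate stage $N$ in the chain, the entries of $N$ (post-cancellation) are specific minors of the symbolic $A$ all containing a common principal submatrix $X = A_{1..s,1..s}$, whose size $s$ is determined by the $\Delta$ operations encountered so far. The base case $k = 0$ is $M = \Delta A = \Delta_{0}$ with $i = 0$, and the formula collapses to $\Det(A_{1..0,1..0})^{n/2} = 1$, which matches Theorem \ref{thm:deltaicontents}. In the inductive step with $N = \phi_{1} \dots \phi_{k} A$, an $A_{1,1}$ in $\phi_{1}$ leaves $X$ intact (it merely subsets the minors), while a $\Delta$ in $\phi_{1}$ advances $X$ via Sylvester's identity and pulls out a $\Det(X)$-factor, exactly as in Theorem \ref{thm:deltaicontents}. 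Applying the outermost $\Delta$ to $N$ of size $\frac{n}{2^{k+1}}$ yields one more Sylvester factorisation, contributing the exponent $\frac{n}{2^{k+1}}$ on the final $\Det(X)$, where $X = A_{1..\frac{(2^{i}-1)n}{2^{i}},\,1..\frac{(2^{i}-1)n}{2^{i}}}$ with $i$ the number of $\Delta$s among $\phi_{1},\dots,\phi_{k}$. The degree bound $\bigO(nd)$ falls out immediately, since after cancelling this content each entry of $M$ is a single minor of $A$ of size $\bigO(n)$, hence of degree $\bigO(nd)$.

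The hard part will be the bookkeeping: tracking the evolution of $X$ and of the Sylvester exponent through interleaved $\Delta$ and $A_{1,1}$ operations, identifying precisely which principal submatrix of $A$ is common to the entries of each intermediate matrix, and checking that the accumulated Sylvester factor is exactly $\Det(X)^{\frac{n}{2^{k+1}}}$---no more and no less---regardless of the ordering of $A_{1,1}$ and $\Delta$ in the chain. As per the earlier remark on computational verification of matrix contents, small cases can be spot-checked in a symbolic algebra system to pin down the formula before committing to the full induction.
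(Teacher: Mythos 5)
Your route is genuinely different from the paper's. The paper does \emph{not} attempt to prove the content formula at all: per the standing remark in Section \ref{sect:deganalysis}, claims of systematic matrix content are deferred to symbolic computation, and the proof of Theorem \ref{thm:intermediatecontent} is purely degree arithmetic. It parametrises the chain as $\Delta \A11^{m_J}\Delta\cdots\A11^{m_1}\Delta A$, inducts on $J$ (the number of net $\Delta$ operations, with the $m_i$ counting interleaved $\A11$s), writes down the degree of $\Delta\A11^{m_{J+1}}M$ explicitly, subtracts the degree of the \emph{claimed} content, and shows the coefficient of $n^2d$ becomes $\frac{1-2^{\sum m_i}}{2^{2J+3+\sum m_i}}\leq 0$, so the post-cancellation degree is $\bigO(nd)$. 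Your proposal instead aims to \emph{derive} the content via Sylvester's identity, maintaining the invariant that entries of each intermediate matrix are minors of $A$ bordering a common principal submatrix $X$. If carried through, this would be strictly stronger: it would establish the content identity that the paper only verifies in Maple, and the $\bigO(nd)$ degree bound would indeed fall out since each post-cancellation entry is a single bordered minor (your picture is consistent with the theorem's exponent: a $\Delta$ applied to an $s\times s$ matrix of minors bordering $X$ pulls out $\Det(X)^{s/2}$, and $s/2 = n/2^{k+1}$ is exactly the claimed exponent).

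The gap is that the load-bearing step is asserted rather than proved, and it is not the classical Sylvester identity. After an $\A11$ operation the entries are still minors bordering $X$, but the subsequent $\Delta$ is a Schur-complement-like expression in a matrix whose entries are themselves minors of $A$; you need the \emph{generalised} Sylvester identity (that determinants of matrices of bordered minors of $A$ are powers of $\Det(X)$ times larger minors of $A$) applied both to $\Det(N_{1,1})$ and to the entries of $\Adj(N_{1,1})$, and you must track how the bordering rows and columns selected by interleaved $\A11$s determine which minor each entry becomes and why the common factor is exactly $\Det(X)^{n/2^{k+1}}$ and no more. You name this as ``the hard part'' but do not carry it out, so as written the proposal establishes neither the content formula nor the degree bound; to make it a proof you must state and prove that lemma (or fall back, as the paper does, on asserting the content and doing the degree bookkeeping explicitly).
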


\begin{proof}

If $M$ is a matrix produced directly from a recent $\A11$ operation (ie. $\phi_1 = \A11$), then we are looking at an unmodified submatrix of a matrix that is assumed to have been cancelled through, and so we can deduce no further content, so $content_{sys}(M) = 1$. Hence it suffices to work with sequences of operations beginning with $\Delta$, as we do further. \\

To prove the claim, we work with generic intermediate matrices of the form $M = \Delta \A11^{m_{J}} \Delta \A11^{m_{J-1}} \Delta \dots \Delta \A11^{m_{1}} \Delta A$, where $J>0$, each $m_i \geq 0$ for all $1 \leq i \leq J$, and $A$ is a matrix of size at least $2^{J + 1 + \sumtoJ}$. In the case where $m_i =0$ for all $1 \leq i \leq J$, this is identical to the case of Theorem \ref{thm:deltaicontents}. We induct on $J$ over $\mathds{N}_{0}$. \\

For $J=0$, we obtain $\Delta A$, and this is of degree $(\frac{n}{2} +1)d$, which is of degree $\bigO(nd)$. Now, for the purposes of induction, assume $J \geq 0$, and $M = \Delta A_{1,1}^{m_J} \Delta \dots \Delta A_{1,1}^{m_{1}} \Delta A$ is of degree $$\frac{(1+\sum_{j=1}^{J} 2^{J + 1 - j + \sum_{i=j}^{J} m_i } ) n d}{2^{J + 1 + \sumtoJ}} + 1 - \bigO(n^{k_0} d),$$ for some $k_0 \geq 2$ and of size $2^{- J - 1 -\sumtoJ} n$. Then consider $\Delta \A11^{m_{J+1}} M$, where $m_{J+1} \geq 0$. This is of size $2^{- J - 1 -\sumtoJplusone}n$, and hence of degree $$\left( \frac{n}{2^{J + 2 + \sumtoJplusone}} + 1 \right) \left( \frac{(1+\sum_{j=1}^{J} 2^{J + 1 - j + \sum_{i=j}^{J} m_i} ) n }{2^{J + 1 + \sumtoJ}}+1 \right) d.$$ The coefficient of $n^2 d$ is $$\frac{1 + \sum_{j=1}^{J} 2^{j + \sum_{i=j}^{J} m_i} }{2^{2J + 3 + \sumtoJplusone + \sumtoJ}},$$ and the theorem suggests we can cancel through by $$Det(A_{1..\frac{(2^{J+1} -1) n}{2^{J+1}},1..\frac{(2^{J+1} -1)n}{2^{J+1}}})^{n 2^{-J - 2 -\sumtoJplusone}},$$ which is of degree $$\frac{(2^{J+1} - 1)n^2 d}{2^{\sumtoJplusone + 2J + 3}}.$$ After cancellation, the coefficient of $n^2 d$ becomes:

$$ \frac{1 + \sum_{j=1}^{J} 2^{\sum_{i=j}^{J} m_i + j} - 2^{J+1 + \sumtoJ} + 2^{\sumtoJ}}{2^{2J + 3 +\sumtoJplusone}} $$

\noindent If we are to achieve degree $\bigO(nd)$, this must be non positive.

$$\Rightarrow 1 + \sum_{j=1}^{J} 2^{\sum_{i=j}^{J} m_i + j} - 2^{J+1 + \sumtoJ} + 2^{\sumtoJ} \leq 0$$

\noindent Manipulating the left hand side, we receive:

$$ \leq 1 + \sum_{j=1}^{J} 2^{\sumtoJ + j} - 2^{J+1 + \sumtoJ} + 2^{\sumtoJ}$$

$$ = 1 + 2^{\sumtoJ} ( \sum_{j=1}^{J} 2^{j} - 2^{J+1} + 1)$$

\noindent Using that $\sum_{j=1}^{J} 2^{j} = 2^{J+1} - 2$,

$$ =  1 + 2^{\sumtoJ} ( 2^{J+1} - 2 - 2^{J+1} + 1 ) $$

$$ =  1 - 2^{\sumtoJ} $$

\noindent which is certainly non positive as $m_i \geq 0$ for all $1 \leq i \leq J+1$. Hence the $n^2 d$ term is at most 0, and thus we achieve $\bigO(nd)$. We verify that we match the coefficient of $nd$ from the inductive step, which is:

$$\frac{1}{2^{\sumtoJ + j}} + \frac{1+\sum_{j=1}^{J} 2^{J + 1 - j + \sum_{i=j}^{J} m_i }}{2^{J + 1  + \sumtoJ}} + 1$$

$$ = \frac{1 + 2^{m_{J+1}} + \sum_{j=1}^{J} 2^{\sum_{i=j}^{J+1} m_i + j + 1}}{2^{J + 2 + \sumtoJplusone}} + 1$$

$$ = \frac{1 + \sum_{j=1}^{J+1} 2^{\sum_{i=j}^{J+1} m_i + j}}{2^{J + 2 + \sumtoJplusone}}$$

\noindent which is the induction statement about the degree, but with $J+1$. Hence by induction $M$ achieves degree $\bigO(nd)$ for all $J \geq 0$, and so all mixed case $\Delta$ achieve degree $\bigO(nd)$.

\end{proof}

\begin{remark} Theorem \ref{thm:deltaicontents} is consistent with Theorem \ref{thm:intermediatecontent} by taking or $m_i = 0$ for all $i$ in the latter to get the former, NOT by taking $J=0$. It is important to note $J$ is commensurate with the net number of $\Delta$ operations undertaken, where as the $m_i$ represent the number of $\A11$ operations undertaken. Note that we know from Theorem \ref{thm:deltaicontents} that the cancellations induced in the pure $\Delta$ case completely coincide with the $\bigO(n^2 d)$ term from the degree, ie. the `$\leq$' from the above theorem is actually an `$=$' in this case. \end{remark}

\begin{remark} We explain the presence of $k_0$ in the above proof. Note that in the case that the cancellation ``overshoots'', ie. the `$\leq$' is really a `$<$`, this results in a negative $\bigO(n^2 d)$ term in the degree. Further, this will actually increase by a factor of $n$ every inductive step (every time $J$ increments with an associated $m_J > 0$, ie. there has been an additional tangible chain of $\A11$ operations) by virtue of the multiplication by the term induced by taking the last $\Delta$ operation. As such, a mixed case may actually be of a lower degree than the pure $\Delta$ case of the same size by a term of $\bigO(n^{k_0} d)$ where $k_0$ is proportional to the number of non-zero $m_i$. As above, Theorem \ref{thm:deltaicontents} implies no such increasing negative $\bigO(n^k d)$ term for the pure $\Delta$ case. \end{remark}


\begin{remark} 16 is the first matrix size where we obtain a mixed case $\Delta$ of degree an order higher than $\bigO(nd)$ that we then cancel. If $A$ is a matrix of size 16, this admits the sequence of operations $\Delta \A11 \Delta A$, which has two separate contiguous sequences of $\Delta$s. This results in a matrix of degree $\bigO(n^2 d)$ (precisely, degree $(\frac{n^2}{8} + \frac{3n}{4} + 1)d = 45d$), and theorem \ref{thm:intermediatecontent} suggests we can cancel through by a factor of degree $\frac{n^2}{16}$. A Maple worksheet examining this case can be found at \cite{FastMatrixInversionWorksheet18}. \end{remark}

The equivalent discussion for cancellation of mixed case adjugates (ie. a ``mixed case'' $\Delta^{adj}$) is not traversed here. The existence of cancellations for mixed case $\Delta$ and ``pure'' $\Delta^{adj}$ suggests that cancellations for mixed case $\Delta^{adj}$ are likely, but the formula and degree of this cancellation are both unknown, too whether it results in a cancellation down to $\bigO(nd)$.

\subsection{Limitations of Cancellations \label{sect:limitations}}

As the degree of both elements of the intended output of a fraction free inversion algorithm is $\bigO(nd)$, it is good news that it appears so far that we can work with elements of optimal degree after cancellation. However one has to think about the cancellations themselves, and that requiring these cancellations induces further problems.

The above theorems describe factorisations possible in the binary case - that is the original matrix $A$ was of binary size, and subsequently all intermediate matrices are of binary size. But the factors found are determinants of submatrices of the original $A$, and these submatrices are (not usually) of binary size. In some sense, it would be elegant and perhaps convenient to recurse on Algorithm \ref{alg:FFBH2} to acquire these determinants, in some manner similar to that of Bareiss-Dodgson for Gaussian Elimination \cite{Dodgson1866} \cite{Bareiss1968}. This requires an extension of Algorithm \ref{alg:FFBH2} to be able to accept and work with matrices of non binary size. This can be done via padding the original matrix in a manner similar to that for Strassen-like multiplication. And indeed such a padding solution can also be employed to be able to multiply intermediate matrices that are of odd size, but actually a better solution is probably ``peeling'' as presented in \cite{HLJJTT1996} in the context of Strassen-like multiplication. \\

We briefly discuss padding for inversion. If $A$ is of odd size, we introduce a row of 0s, and then a further column of 0s, and add a 1 on the leading diagonal. This conditions the input matrix to be of even size, and so ensures that we can divide the matrix into quarters. Note that this matrix has the same determinant as the input, and we can ``trim'' the resulting adjugate to receive the adjugate of the actual input. But this has interesting implications on the intermediate matrices depending on if we pad ``upper left'' or ``lower right'' - in particular the elements we introduce are of maximum degree 0, and so our choice of padding has interesting consequences on the $\Delta_i$ matrices - in particular the former choice leads to $\Delta_i$ matrices of uniform degree, whereas the latter choice results in $\Delta_i$ that are more diagonal (ie. we get some 0s off the leading diagonal in the far right and bottom left of the matrix), but of potentially higher maximum degree. A Maple worksheet examining the dichotomy of cases can be found at \cite{FastMatrixInversionWorksheet18}. \\

In particular however, with respect to the obtained cancellations we now require up to four recursions on the algorithm per iteration - the two original recursions on $\A11$ and $\Delta$, and up to two further recursions to provide cancellation factors for $\Delta$ and $\Delta^{adj}$. Furthermore, these two latter recursions may be on submatrices that are close to the whole original matrix $A$. In particular, the pure $\Delta_i$ or $\Delta^{adj}_i$ for larger $i$ may require recursions on matrices of size $\approx n$, which isn't in the spirit of the rest of the algorithms that require recursions purely on matrices of half the size - a disaster for worst case complexity.

An alternative to the additional recursions is to instead calculate the content in the spirit of Definition \ref{defn:MC}. That is, we take pairwise GCDs of elements, but we can in fact be more intelligent and terminate taking the GCDs when we reach the degree we expect the content to have. This is a compromise between systematic and non systematic matrix content, and ensures in best terms possible that we don't catch factors that we wouldn't normally deduce to be there. We can also randomise our selection of elements to take GCDs of in the same spirit as calculating content of polynomials. There is a consideration however that in the situation where $A$ was not uniformly of degree $d$, the above theorems on cancellations only provide upper bounds for the degree that we expect.

\begin{thm} Assuming that every $\Delta^{adj}$ matrix admits cancellation (including those arising from mixed cases), and via the above suggestion that we take pairwise GCDs to cancel $\Delta_i$ and $\Delta_{i}^{adj}$, we require taking $\bigO(n)$ GCDs of polynomials of degree at least $\bigO(n^2 d)$ in $m$ variables. \end{thm}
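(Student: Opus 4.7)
The plan is to verify three quantitative claims that together yield the bound: (a) at most $\bigO(n)$ cancellation events occur across the full recursion of Algorithm \ref{alg:FFBH2}; (b) under the ``terminate once the running GCD reaches the expected degree'' strategy, each event contributes $\bigO(1)$ GCDs; and (c) every such GCD involves polynomials of degree at least $\bigO(n^{2}d)$. The variable count $m$ is immediate from the standing assumption that all entries lie in $R[x_{1},\ldots,x_{m}]$.

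For (a), I would count non-base-case invocations of Algorithm \ref{alg:FFBH2} via the recurrence $T(n) = 2T(n/2) + 1$ with $T(2) = 0$, giving $T(n) = n/2 - 1$. Each invocation produces at most one $\Delta$ and one $\Delta^{adj}$ that may carry non-trivial systematic content, so under the theorem's standing hypothesis that every $\Delta^{adj}$ (including those from mixed cases) admits cancellation, the total number of events is bounded by $2(n/2 - 1) \in \bigO(n)$.

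For (b) and (c), Theorems \ref{thm:deltaicontents}, \ref{thm:deltaadjicontent}, and \ref{thm:intermediatecontent} give the systematic content of every intermediate matrix in closed form, so the target degree for the running GCD is known a priori and we can terminate as soon as it is reached. Under the paper's symbolic-matrix setting, a bounded number of pairs suffices; treating this as $\bigO(1)$ per event and combining with (a) gives $\bigO(n)$ GCDs in total. For the minimum degree, the shallowest cancellable matrix is $\Delta_{1}$: direct expansion of $a_{1,1}^{adj} A_{2,2}^{(1)} - A_{2,1}^{(1)} A_{1,1}^{adj} A_{1,2}^{(1)}$ on the submatrices of $\Delta_{0}$ shows $\Delta_{1}$ has pre-cancellation degree $(n^{2}/8 + \bigO(n))d$. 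Deeper pure $\Delta_{k}$ and their adjugates grow as $\bigO(n^{k+1}d)$ by Theorem \ref{thm:delta}, and mixed cases containing at least one net $\Delta$ retain a leading $n^{2}d$ coefficient by the induction in the proof of Theorem \ref{thm:intermediatecontent}. Hence every pairwise GCD is computed between polynomials of degree at least $\bigO(n^{2}d)$ in $m$ variables.

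The main obstacle is the per-event bound in (b): an adversarial input could in principle force many pairwise GCDs before the running GCD shrinks to the expected target degree, so the ``$\bigO(1)$ per event'' step is really a heuristic estimate that leans on genericity of $A$. A deterministic bound would seem to require showing that, on a symbolic $A$, a small constant number of matrix entries already witnesses the exact content — this goes beyond what the earlier theorems supply and is the genuinely hard part of making the statement rigorous rather than indicative.
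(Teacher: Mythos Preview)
Your counting in (a) and your degree argument in (c) are essentially what the paper does, but you have inverted the direction of the bound. The theorem, as the paper proves it, is a \emph{lower} bound: each intermediate matrix carrying non-trivial systematic content forces at least one pairwise GCD, and there are order-$n$ such matrices. Concretely, the paper counts at level $i$ the $2^{i-1}-1$ matrices of $\Delta$ type with more than one net $\Delta$ operation (the single $\Delta A_{1,1}^{i-1}A$ has trivial content and is excluded) together with the $2^{i-1}$ matrices $\Delta^{adj}$ for $i\le \log_2 n - 2$ (size-$2$ adjugates come from the cofactor formula and need no cancellation); summing gives $\tfrac{3n}{4}-\log_2 n - 1$. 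The subsequent Remark then separately bounds the worst case above by $\bigO(n^{2})$.

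Because the intended statement is a lower bound, your step (b) --- the heuristic ``$\bigO(1)$ GCDs per event'' --- is not needed at all, and the obstacle you flag in your final paragraph disappears. Replace (b) by the trivial observation that extracting content requires at least one GCD per matrix and you recover the paper's argument almost verbatim. Your recurrence $T(n)=2T(n/2)+1$ yields the same $\bigO(n)$ count of events (mildly over-counting, since you do not exclude the content-free $\Delta$ at each level or the size-$2$ adjugates), and your degree computation for $\Delta_{1}$ matches the paper's one-line justification that any $\Delta$ or $\Delta^{adj}$ built from a degree-$\bigO(nd)$ matrix has degree $\bigO(n^{2}d)$ before cancellation.
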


\begin{proof} At iteration $i$, $1 \leq i \leq \log_{2} n - 1$, we have $2^i$ principal submatrices that are the results of taking various $\A11$ and $\Delta$ operations. Half of these are the result of a $\Delta$ operation \emph{at this level}. Exactly one is the result of just one net $\Delta$ operation, while the rest admit cancellation. This results in $2^{i-1} -1$ matrices of a $\Delta$ description that admit cancellation, $1 \leq i \leq \log_{2} n - 1$. Meanwhile all received $\Delta^{adj}$ admit cancellation, except for those at size 2. This makes $2^{i-1}$ $\Delta^{adj}$ that admit cancellation, $1 \leq i \leq \log_{2} n - 2$. Note that we assume each $\Delta$, $\Delta^{adj}$ is calculated from a matrix of degree $\bigO(nd)$, and as such before cancellation is degree $\bigO(n^2 d)$. Also noting that we must take at least one GCD per matrix in order to obtain its content, we take at least $\sum_{i=1}^{\log_{2} n -1} 2^{i-1} -1 + \sum_{i=1}^{\log_{2} n -2} 2^{i-1} = \frac{3}{2} 2^{\log_{2} n - 1} - \log_{2} n - 1 = \frac{3n}{4} - \log_{2} n - 1 \in \bigO(n)$ GCDs. \end{proof}

\begin{remark} Interestingly, the worst case (in terms of number of GCDs) is at most $\bigO(n^2)$ GCDs of polynomials of degree at least $\bigO(n^2 d)$ if we have to take pairwise GCDs of every element in every intermediate matrix to find the respective content. At iteration $i$, each matrix has $\frac{n^2}{2^{2i-2}}$ elements, so the sum is $n^2 (\sum_{i=1}^{\log_{2} n -1} \frac{(2^{i-1} - 1)}{2^{2i-2}} + \sum_{i=1}^{\log_{2} n -2} \frac{1}{2^{i-1}}) = \frac{4}{3}(2n^2 -9n +4) \in \bigO(n^2)$. Anecdotally, the author finds that the content can usually be found in exactly one step, ie. exactly one GCD (of any two elements) coincides with $content_{gcd}(A)$. Therefore, we expect the average case to be at least $\bigO(n)$ GCDs.\end{remark}

\section{Other methods}

We acknowledge a fraction free formulation arising from work summarised \& reviewed in \cite{ShpilkaYehudayoff2010}, regarding arithmetic circuits \& formulae. Informally, Strassen's formulations can be followed in such a manner that we separate the numerator and denominator of each entry of the result during calculation. In particular however this assumes unit cost for arithmetic to be efficient, and so does not avoid the issue we present. In particular one very large GCD must be taken per entry of the result upon completion of all relevant arithmetic. We also note that this formulation attributes a denominator to every entry of the result at all times, whereas our formulation associates a denominator (implicitly or otherwise) to every matrix throughout the algorithm.

Giorgi, Jeannerod and Villard present a method for inversion of polynomial matrices in \cite{Villardetal2003}, but this is for matrices over $K[x]$, ie. polynomials in one variable, where we cover the case of $m>1$. This claims a running time of $\tilde{\bigO}(n^3 d)$, where $\tilde{\bigO}$ is the usual ``soft $\bigO$'' which drops logarithmic terms. Similar claims are made for determinants, but with a running time of $\tilde{\bigO}(n^\omega d)$, via Storjohann \cite{Storjohann2002}. While their solutions do use fast matrix multiplication, the main part of the solution in \cite{Villardetal2003} revolves around ``$\sigma$-bases'', which don't appear to generalise to the case of multiple variables.

One notes that one could use Kronecker substitutions \cite{ArnoldRoche2014} to reduce the case of multivariate polynomial matrices to the case of univariate polynomial matrices, and then use the method discussed above. However, this will increase the degree of polynomials considerably, and it seems unlikely that this outperforms a direct approach.

\section{Conclusions and Further Questions}

Indeed with respect to matrices of multivariate polynomials, a fraction free solution as a translation of Strassen's original inversion certainly exists, but has several caveats. An adaptation of Strassen's inversion requires using an adaptation of the matrix $\Delta$, which in the classical case is in general a matrix of rational functions, but in the fraction free case becomes a degree $\bigO(nd)$ object. This is a result of adjugates / determinants being ``degree $nd$'' objects. These modified $\Delta$ matrices have a significant amount of ``content'', which we find to be redundant. While this content is directly computable via recursion, it is most likely better to remove this content via taking GCDs ``intelligently''. We note that for the univariate case covered in \cite{Villardetal2003}, with each recursion they obtain two matrices of half the size of degree $2d$, where as for our multivariate case we obtain two matrices of half the size of degree at least $\bigO(nd)$ after cancellation, which greatly increases the worst case complexity.  \\

\noindent In total, the following remain unexplored:
\begin{itemize} 
\item The equivalent of Theorem \ref{thm:intermediatecontent} for adjugates (ie. mixed case $\Delta^{adj}_{i}$) is still unknown. The fact that the pure case generalises to the mixed case for $\Delta$ seems to make this hopeful, but it is unknown what the factors will be, nor their degree - one would hope this takes the matrices down to degree $\bigO(nd)$ to match optimal degree as per the rest.
\item What one does to adapt Algorithm \ref{alg:FFBH2} with the cancellations arising from the theorems in this work. In doing so one should essentially require none of the intermediate divisions that presently appear towards getting the $B_{i,j}$, where instead these occur directly on $\Delta$ and $\Delta^{adj}$ in order to ensure the ensuing arithmetic is all on degree $\bigO(nd)$ objects at every iteration. This corresponds to the claim that we find the content to be ``redundant'', where as per Algorithm \ref{alg:FFBH2} the factors that are divided through are then unused further in the algorithm.
\item When cancellations occur, it is most likely best to do so via ``intelligent GCDs''. In practice this may require some heuristics to ensure you get the correct content as to end up with correct output - it is easy to verify that you obtain the correct cancellation degree in easy cases, such as matrices of uniform degree of a binary size, such as is assumed in this paper. But in general polynomial matrices may not be of uniform degree nor binary size, and this complicates trying to compute such factors intelligently.
\item Whether such an approach is feasible with respect to any cases obtained in practice for polynomial matrix inversion. Certainly the recursion issue seems a damning one, where one may require recursions on matrices approximately as large as the input, but this becomes more prevalent at very large matrix sizes. On smaller examples (which are likely more abundant), this approach may very optimistically beat the classical case where arithmetic would occur on higher degree polynomials than this work would achieve.
\end{itemize}

\printbibliography

\end{document}